\documentclass[lettersize,journal]{IEEEtran}
\usepackage{amsmath,amsfonts,amssymb}
\usepackage{amsthm}
\usepackage{graphicx}
\usepackage{algorithmicx}
\usepackage{algorithm}
\usepackage{algpseudocode}
\usepackage{textcomp}
\usepackage{makecell}
\usepackage{multirow,multicol}
\usepackage[colorlinks]{hyperref}
\hypersetup{citecolor=blue}
\usepackage[table]{xcolor}
\usepackage{threeparttable}
\usepackage{balance}
\usepackage{booktabs}
\usepackage[mathscr]{eucal}
\newtheorem{theorem}{Theorem}
\newtheorem{corollary}{Corollary}

\newtheorem{proposition}{Proposition}
\theoremstyle{remark}

\usepackage[font=small,skip=2pt]{caption}

\usepackage{enumitem}
\setlist{nosep}
\begin{document}
\IEEEaftertitletext{\vspace{-14mm}}
\title{\LARGE Jointly Correlated Dual-Side Fluid Antenna System}

\author{Zhentian Zhang, Yuanhui Wu, Kai-Kit Wong,~\IEEEmembership{Fellow,~IEEE}, Hao Jiang,~\IEEEmembership{Senior Member,~IEEE}, An Li
\thanks{}
\thanks{Z. Zhang and H. Jiang are with the National Mobile Communications Research Laboratory, Southeast University, Nanjing, 210096, China and H.~Jiang is also with the School of Artificial Intelligence, Nanjing University of Information Science and Technology, Nanjing 210044, China. (e-mail: zhentianzhangzzt@gmail.com, jianghao@nuist.edu.cn).}
\thanks{K. K. Wong is affiliated with the Department of Electronic and Electrical Engineering, University College London, Torrington Place, WC1E 7JE, United Kingdom and he is also affiliated with the Department of Electronic Engineering, Kyung Hee University, Yongin-si, Gyeonggi-do 17104, Korea. (e-mails: kai-kit.wong@ucl.ac.uk)}
\thanks{Y. Wu is with the College of Artificial Intelligence, Nanjing University of Information Science and Technology, 210044, China. (e-mails: 202412621447@nuist.edu.cn)}
\thanks{A. Li is with the School of Information Engineering, Nanchang University, Nanchang 330031, China. (e-mail: lian@ncu.edu.cn)}
\thanks{Corresponding authors: K.-K. Wong (kai-kit.wong@ucl.ac.uk)}}



\maketitle
\begin{abstract}
Fluid antenna systems (FASs) have introduced a new paradigm for wireless system design by revealing how mutual correlation can be exploited to harvest inherent spatial diversity. While existing studies have mainly focused on one-sided FAS configurations, i.e., with FAS deployed at either the transmitter or the receiver, this work investigates the ergodic capacity of a jointly correlated dual-side FAS under statistical eigenmode transmission. Specifically, a jointly correlated dual-side channel model is developed, and the corresponding ergodic capacity together with a tight closed-form upper bound is derived. In addition, the optimal power allocation is studied, and a practical iterative algorithm is proposed for its implementation.
\end{abstract}

\begin{IEEEkeywords}
	Jointly correlated dual-sided fluid antenna systems, statistical eigenmode transmission, ergodic capacity.
\end{IEEEkeywords}
\vspace{-4mm}
\section{Introduction}
Fluid antenna systems (FASs) refer to software-controllable fluidic, conductive, or dielectric structures capable of reconfiguring their radiation characteristics to meet communication requirements. By exploiting antenna position reconfigurability, FAS was first introduced into wireless communications in \cite{fas0,FAS_twc_21}. Since then, considerable efforts have been devoted to characterizing the performance limits of FAS-assisted channels \cite{FAS_tutorial}. However, existing asymptotic \cite{FAS_twc_21} and non-asymptotic \cite{performance} studies have mainly focused on single-side FAS, where the fluid antenna is deployed either at the transmitter or the receiver. To extend FAS to broader application scenarios, a dual-side FAS framework is essential, which motivates this work. The main contributions are summarized as follows:
\begin{enumerate}
	\item We establish a jointly correlated dual-side FAS channel model that captures both correlated and diffusive components and applies to a broad range of setups.
	\item Based on this model, we analyze the ergodic capacity and derive a tight upper bound under statistical eigenmode transmission, without requiring instantaneous channel state information.
	\item We further investigate the optimal power allocation for eigenmode transmission and develop an iterative algorithm for practical implementation to approach the capacity.
\end{enumerate}

\emph{Notation:} Boldface lower- and upper-case letters denote vectors and matrices, respectively. 
$(\cdot)^T$, $(\cdot)^H$, $(\cdot)^*$, $\mathrm{tr}(\cdot)$, $\det(\cdot)$, and $\mathbb{E}\{\cdot\}$ denote transpose, Hermitian transpose, conjugate, trace, determinant, and expectation. 
$\boldsymbol{I}_N$ and $\boldsymbol{1}_N$ denote the identity matrix and all-one vector of size $N$. 
$\mathcal{CN}(\boldsymbol{\mu},\boldsymbol{\Sigma})$ denotes the proper complex Gaussian distribution, and $\boldsymbol{A}\succeq\boldsymbol{0}$ means that $\boldsymbol{A}$ is Hermitian positive semidefinite. 
For $\boldsymbol{A}$, $[\boldsymbol{A}]_{m,n}$ denotes its $(m,n)$th entry, $\mathrm{diag}(\boldsymbol{A})$ its diagonal, $\mathrm{diag}(\boldsymbol{a})$ the diagonal matrix with diagonal $\boldsymbol{a}$, $\odot$ the Hadamard product, and $|\boldsymbol{A}|^2$ the element-wise square magnitude. 
$\mathcal{S}_N^{(k)}$ denotes all size-$k$ subsets of $\{1,\ldots,N\}$, and $\boldsymbol{A}_\beta^\alpha$ the submatrix indexed by rows $\alpha$ and columns $\beta$. 
$\boldsymbol{\lambda}_{(i)}$, $\boldsymbol{\Omega}_{(i)}$, and $\boldsymbol{\Omega}_{(m)}^{(i)}$ denote deletion of the $i$th entry, the $i$th column, and the $m$th row together with the $i$th column, respectively. $\mathrm{Per}(\boldsymbol{A})$ denotes the permanent of $\boldsymbol{A}$.

\section{Jointly Correlated Dual-Side FAS Channel}\label{sec:system_model}
Inspired by the jointly correlated framework \cite{jointly_correlated1,jointly_correlated2}, we consider a point-to-point wireless link in which both the transmitter and the receiver are equipped with fluid antennas. The transmit fluid antenna provides $N_t$ preset configurable ports that are uniformly distributed over a linear aperture of normalized length $W_t$, whereas the receive fluid antenna provides $N_r$ preset configurable ports uniformly distributed over a linear aperture of normalized length $W_r$. Throughout the paper, $W_t$ and $W_r$ are normalized by the carrier wavelength. The locations of the transmit and receive ports are respectively given by
\begin{align}
	x_{t,p} &= \frac{(p-1)W_t}{N_t-1}, \quad p=1,2,\ldots,N_t, \\ \nonumber
	x_{r,m} &= \frac{(m-1)W_r}{N_r-1}, \quad m=1,2,\ldots,N_r.
\end{align}

Let $\boldsymbol{H}\in\mathbb{C}^{N_r\times N_t}$ denote the complete port-domain channel matrix, where the $(m,p)$-th entry $[\boldsymbol{H}]_{m,p}$ represents the channel coefficient between the $p$-th transmit port and the $m$-th receive port. To establish a general statistical framework, we first consider a \emph{full-port fluid link}, in which all transmit and receive ports are viewed as available simultaneously. The corresponding input-output relationship is given by
\begin{equation}\label{eq:full_system_model}
	\boldsymbol{y} = \boldsymbol{H}\boldsymbol{x} + \boldsymbol{n},
\end{equation}
where $\boldsymbol{x}\in\mathbb{C}^{N_t\times 1}$ is the transmitted signal vector, $\boldsymbol{y}\in\mathbb{C}^{N_r\times 1}$ is the received signal vector, and $\boldsymbol{n}\sim\mathcal{CN}(\boldsymbol{0},\sigma_\eta^2\boldsymbol{I}_{N_r})$ is the additive white Gaussian noise vector. The transmit covariance matrix is defined as
\begin{equation}\label{eq:tx_covariance}
	\mathbb{E}\{\boldsymbol{x}\boldsymbol{x}^H\} = \frac{P}{N_t}\boldsymbol{Q},
\end{equation}
where $\boldsymbol{Q}\succeq \boldsymbol{0}$ and $\mathrm{tr}(\boldsymbol{Q})=N_t$, such that the total average transmit power equals $P$. We define the average signal-to-noise ratio (SNR) as
\begin{equation}\label{eq:snr_def}
	\rho \triangleq \frac{P}{\sigma_\eta^2},\quad
	\gamma \triangleq \frac{\rho}{N_t}.
\end{equation}
Throughout the work, all types of system follow identical power constraint by \eqref{eq:tx_covariance} and \eqref{eq:snr_def} for fair comparison. 

\vspace{-1mm}
\paragraph{Marginal Port Correlation at Both Ends}

Due to the uniform spacing of the fluid-antenna ports and the continuous aperture structure, the marginal port correlations at the transmitter and receiver can be modeled by two Hermitian Toeplitz matrices, denoted by $\boldsymbol{\Sigma}_t\in\mathbb{C}^{N_t\times N_t}$ and $\boldsymbol{\Sigma}_r\in\mathbb{C}^{N_r\times N_r}$, respectively. Under Clarke-type spatial correlation, the Toeplitz matrices can be accurately specified as
\begin{subequations}
	\begin{align}
	[\boldsymbol{\Sigma}_t]_{p,q}
	&= a_t(p-q),~
	a_t(\ell)=\mathrm{sinc}\!\left(\frac{2\pi \ell W_t}{N_t-1}\right),\label{eq:sigma_t}\\
	[\boldsymbol{\Sigma}_r]_{m,n}
	&= a_r(m-n),~
	a_r(\ell)=\mathrm{sinc}\!\left(\frac{2\pi \ell W_r}{N_r-1}\right). \label{eq:sigma_r}
	\end{align}
\end{subequations}
Their eigenvalue decompositions are given by
\begin{equation}
		\boldsymbol{\Sigma}_t = \boldsymbol{U}_t \boldsymbol{\Lambda}_t \boldsymbol{U}_t^H,\
		\boldsymbol{\Sigma}_r = \boldsymbol{U}_r \boldsymbol{\Lambda}_r \boldsymbol{U}_r^H,
\end{equation}
where $\boldsymbol{U}_t$ and $\boldsymbol{U}_r$ are unitary matrices collecting the transmit and receive eigenmodes, respectively, and $\boldsymbol{\Lambda}_t$ and $\boldsymbol{\Lambda}_r$ are diagonal matrices containing the associated marginal eigenmode powers.

\paragraph{Jointly Correlated Port-Domain Channel Representation}

To capture the \emph{joint} coupling between the transmit fluid modes and the receive fluid modes, we model the complete port-domain channel matrix as
\begin{equation}\label{eq:joint_channel_model}
	\boldsymbol{H}
	=
	\boldsymbol{U}_r \widetilde{\boldsymbol{H}} \boldsymbol{U}_t^H,
\end{equation}
where $\widetilde{\boldsymbol{H}}\in\mathbb{C}^{N_r\times N_t}$ denotes the channel matrix in the joint eigenmode domain. Following the jointly correlated framework, we further write
\begin{equation}\label{eq:eigenmode_channel}
	\widetilde{\boldsymbol{H}}
	=
	\boldsymbol{D} + \boldsymbol{M}\odot \boldsymbol{H}_{0},
\end{equation}
where $\boldsymbol{D}\in\mathbb{C}^{N_r\times N_t}$ denotes the deterministic mean/specular component, $\boldsymbol{M}\in\mathbb{R}_+^{N_r\times N_t}$ is a nonnegative matrix describing the scattering strength between each transmit-receive eigenmode pair, $\boldsymbol{H}_{0}\in\mathbb{C}^{N_r\times N_t}$ has independent zero-mean unit-variance entries, and $\odot$ denotes the Hadamard product.

The key statistical quantity of the proposed dual-sided FAS model is the \emph{joint eigenmode coupling matrix}
\begin{equation}\label{eq:omega_def}
	\boldsymbol{\Omega}
	\triangleq
	\mathbb{E}\!\left\{\widetilde{\boldsymbol{H}}\odot \widetilde{\boldsymbol{H}}^*\right\}
	=
	|\boldsymbol{D}|^2 + \boldsymbol{M}\odot\boldsymbol{M},
\end{equation}
where $|\boldsymbol{D}|^2$ is understood element-wise squared-modulus. The $(i,j)$-th entry $[\boldsymbol{\Omega}]_{i,j}$ quantifies the average channel power coupled from the $j$-th transmit eigenmode to the $i$-th receive eigenmode. Hence, unlike conventional separable models, the proposed representation explicitly characterizes the non-separable interaction between the two fluid antenna apertures through $\boldsymbol{\Omega}$. The marginal receive and transmit eigenmode power profiles are induced by the row and column sums of $\boldsymbol{\Omega}$. To make this relation precise, define the normalized marginal eigenmode power-profile vectors as
\begin{equation}\label{eq:pi_r_pi_t_def}
	\boldsymbol{\pi}_r
	\triangleq
	\frac{\mathrm{diag}(\boldsymbol{\Lambda}_r)}{\mathrm{tr}(\boldsymbol{\Lambda}_r)},~
	\boldsymbol{\pi}_t
	\triangleq
	\frac{\mathrm{diag}(\boldsymbol{\Lambda}_t)}{\mathrm{tr}(\boldsymbol{\Lambda}_t)},
\end{equation}
where $\mathrm{diag}(\boldsymbol{\Lambda}_r)$ and $\mathrm{diag}(\boldsymbol{\Lambda}_t)$ are the vectors collecting the diagonal entries of $\boldsymbol{\Lambda}_r$ and $\boldsymbol{\Lambda}_t$, respectively. Subsequently, the normalized row-sum and column-sum vectors of $\boldsymbol{\Omega}$ satisfy
\begin{equation}\label{eq:omega_col_sum}
	\frac{\boldsymbol{\Omega}\boldsymbol{1}_{N_t}}
	{\boldsymbol{1}_{N_r}^T\boldsymbol{\Omega}\boldsymbol{1}_{N_t}}
	=
	\boldsymbol{\pi}_r,~
	\frac{\boldsymbol{\Omega}^T\boldsymbol{1}_{N_r}}
	{\boldsymbol{1}_{N_r}^T\boldsymbol{\Omega}\boldsymbol{1}_{N_t}}
	=
	\boldsymbol{\pi}_t,
\end{equation}
where $\boldsymbol{1}_{N_t}$ and $\boldsymbol{1}_{N_r}$ are the all-one vectors of lengths $N_t$ and $N_r$, respectively. Hence, the row and column sums of $\boldsymbol{\Omega}$ recover the receive and transmit marginal eigenmode power profiles, respectively, up to a common overall normalization. Therefore, $\boldsymbol{\Sigma}_t$ and $\boldsymbol{\Sigma}_r$ describe only the marginal port correlations, {\em while $\boldsymbol{\Omega}$ captures the full joint correlation structure between the two fluid apertures}. It is worth noting that the proposed model includes the conventional separable correlation model as a special case. In particular, when $\boldsymbol{D}=\boldsymbol{0}$ and $\boldsymbol{M}$ is rank one, or equivalently when $\boldsymbol{\Omega}$ is rank one, the jointly correlated fluid-channel model reduces to a Kronecker-type representation.

\vspace{-4mm}
\section{Capacity and Upper bound}\label{sec:capacity_definition}

\paragraph{Ergodic Capacity}
We assume that the receiver has perfect instantaneous CSI, whereas the transmitter has access only to the {\em channel statistics}, i.e., $\boldsymbol{U}_t$, $\boldsymbol{U}_r$, $\boldsymbol{D}$, $\boldsymbol{M}$, and equivalently $\boldsymbol{\Omega}$. Under this assumption, the ergodic capacity of the full-port dual-sided FAS link is achieved by a zero-mean proper complex Gaussian input and is given by
\begin{equation}\label{eq:full_capacity_def}
	C_{\mathrm{full}}
	=
	\max_{\boldsymbol{Q}\succeq \boldsymbol{0},\, \mathrm{tr}(\boldsymbol{Q})=N_t}
	\mathbb{E}\!\left[
	\log_2
	\det\!\left(
	\boldsymbol{I}_{N_r}
	+
	\gamma \boldsymbol{H}\boldsymbol{Q}\boldsymbol{H}^H
	\right)
	\right].
\end{equation}

Substituting \eqref{eq:joint_channel_model} into \eqref{eq:full_capacity_def}, and following the principle of statistical eigenmode transmission, the transmit covariance matrix can be parameterized as
\begin{equation}\label{eq:Q_stat_eigen}
	\boldsymbol{Q}
	=
	\boldsymbol{U}_t \mathrm{diag}(\boldsymbol{\lambda}) \boldsymbol{U}_t^H,
\end{equation}
where $\boldsymbol{\lambda}=[\lambda_1,\lambda_2,\ldots,\lambda_{N_t}]^T$ satisfies
\begin{equation}\label{eq:lambda_constraint}
	\lambda_i\ge 0,~
	\sum_{i=1}^{N_t}\lambda_i = N_t,~i=1,2,\ldots,N_t.
\end{equation}
Accordingly, the ergodic capacity can be rewritten in the eigenmode domain as
\begin{equation}\label{eq:full_capacity_eigen}
	C_{\mathrm{full}}
	=
	\max_{\boldsymbol{\lambda}\ge \boldsymbol{0},\, \boldsymbol{1}^T\boldsymbol{\lambda}=N_t}
	\mathbb{E}\!\left[
	\log_2
	\det\!\left(
	\boldsymbol{I}_{N_r}
	+
	\gamma \widetilde{\boldsymbol{H}}\mathrm{diag}(\boldsymbol{\lambda})\widetilde{\boldsymbol{H}}^H
	\right)
	\right].
\end{equation}
Equation \eqref{eq:full_capacity_eigen} serves as the {\em exact} ergodic-capacity expression of the jointly correlated dual-fluid antenna channel under statistical CSI at the transmitter.

\paragraph{Single-Port-Pair Selection Capacity}
In a practical fluid-antenna implementation, it is often desirable to activate only one transmit port and one receive port due to hardware and RF-chain limitations. Let $(p^\star,m^\star)$ denote the selected transmit-receive port pair. The resulting scalar input-output relation is
\begin{equation}\label{eq:selection_signal_model}
	y = [\boldsymbol{H}]_{m^\star,p^\star} x + \eta,
\end{equation}
where $x\in\mathbb{C}$ is the transmitted symbol with average power $P$, and $\eta\sim\mathcal{CN}(0,\sigma_\eta^2)$. If the port pair is selected to maximize the instantaneous channel gain, then
\begin{equation}\label{eq:selection_rule}
	(p^\star,m^\star)
	=
	\arg\max_{1\le p\le N_t,\;1\le m\le N_r}
	|[\boldsymbol{H}]_{m,p}|^2.
\end{equation}
The corresponding exact ergodic capacity of the selected dual-fluid link is
\begin{equation}\label{eq:selection_capacity}
	C_{\mathrm{sel}}
	=
	\mathbb{E}\!\left[
	\log_2\!\left(
	1+\rho
	\max_{1\le p\le N_t,\;1\le m\le N_r}
	|[\boldsymbol{H}]_{m,p}|^2
	\right)
	\right].
\end{equation}

Since the practical single-port-pair architecture is a constrained implementation of the full-port fluid link, its ergodic capacity is upper bounded by that of the full-port benchmark, i.e., $C_{\mathrm{sel}} \le C_{\mathrm{full}}$. 

\paragraph{Capacity Upper Bound for the Jointly Correlated Dual-Fluid Channel Channel}
\label{sec:capacity_upper_bound}

Here, we derive a closed-form upper bound on the ergodic capacity of the jointly correlated dual-fluid channel defined in
\eqref{eq:joint_channel_model}--\eqref{eq:omega_col_sum}. The derivation follows the statistical eigenmode transmission principle and converts the expectation of a random determinant into the extended permanent of the joint eigenmode coupling matrix. For notational convenience, let
\begin{equation}
	\boldsymbol{\Lambda} \triangleq \mathrm{diag}(\boldsymbol{\lambda}),~
	\boldsymbol{\lambda}\succeq \boldsymbol{0},~
	\boldsymbol{1}^T\boldsymbol{\lambda}=N_t.
\end{equation}
We also recall the definition of the extended permanent \cite{jointly_correlated1} of an $M\times N$ matrix $\boldsymbol{A}$
\begin{equation}\label{eq:extended_permanent_def_append}
	\underline{\mathrm{Per}}(\boldsymbol{A})
	\triangleq
	\mathrm{Per}\!\left([\boldsymbol{I}_M\ \boldsymbol{A}]\right)
	=
	\mathrm{Per}\!\left([\boldsymbol{I}_N\ \boldsymbol{A}^T]\right).
\end{equation}
Starting from \eqref{eq:full_capacity_eigen}, define
\begin{equation}\label{eq:Cu_lambda_def}
	\widetilde C_{\mathrm{full}}(\boldsymbol{\lambda})
	\triangleq
	\mathbb{E}\!\left[
	\log_2
	\det\!\left(
	\boldsymbol{I}_{N_r}
	+
	\gamma \widetilde{\boldsymbol H}\boldsymbol{\Lambda}\widetilde{\boldsymbol H}^H
	\right)
	\right].
\end{equation}
Then the exact ergodic capacity of the full-port link can be written as
\begin{equation}\label{eq:Cfull_lambda_form}
	C_{\mathrm{full}}
	=
	\max_{\boldsymbol{\lambda}\succeq \boldsymbol{0},\, \boldsymbol{1}^T\boldsymbol{\lambda}=N_t}
	\widetilde C_{\mathrm{full}}(\boldsymbol{\lambda}).
\end{equation}
We now derive a closed-form upper bound for \eqref{eq:Cfull_lambda_form}.
\vspace{-2mm}
\begin{theorem}\label{thm:dual_fluid_upper_bound}
	Under the jointly correlated dual-fluid channel model $		\boldsymbol H=\boldsymbol U_r\widetilde{\boldsymbol H}\boldsymbol U_t^H,~
	\widetilde{\boldsymbol H}=\boldsymbol D+\boldsymbol M\odot \boldsymbol H_{0}$, suppose that the entries of $\boldsymbol H_{0}$ are independent, zero-mean, and unit-variance, and that the mean matrix $\boldsymbol D$ has at most one non-zero entry in each row and each column. Then the ergodic capacity in \eqref{eq:Cfull_lambda_form} is upper bounded as
	\begin{equation}\label{eq:Cfull_upper_bound}
		C_{\mathrm{full}}
		\le
		C_u
		=
		\max_{\boldsymbol{\lambda}\succeq \boldsymbol{0},\, \boldsymbol{1}^T\boldsymbol{\lambda}=N_t}
		\widetilde C_u(\boldsymbol{\lambda}),
	\end{equation}
	where $\widetilde C_u(\boldsymbol{\lambda})
	=
	\log_2
	\underline{\mathrm{Per}}
	\!\left(
	\gamma \boldsymbol{\Omega}\boldsymbol{\Lambda}
	\right)$
	and $\boldsymbol{\Omega}
	=
	\mathbb{E}\!\left\{\widetilde{\boldsymbol H}\odot \widetilde{\boldsymbol H}^*\right\}
	=
	|\boldsymbol D|^2+\boldsymbol M\odot \boldsymbol M.$
\end{theorem}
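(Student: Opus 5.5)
Fix an arbitrary feasible $\boldsymbol\lambda$ and show $\widetilde C_{\mathrm{full}}(\boldsymbol\lambda)\le \widetilde C_u(\boldsymbol\lambda)$. Maximizing both sides over the same simplex then gives \eqref{eq:Cfull_upper_bound}. The pointwise bound comes from Jensen's inequality for the concave $\log_2$:
\begin{equation*}
\widetilde C_{\mathrm{full}}(\boldsymbol\lambda)\le \log_2 \mathbb{E}\!\left[\det\!\left(\boldsymbol I_{N_r}+\gamma\widetilde{\boldsymbol H}\boldsymbol\Lambda\widetilde{\boldsymbol H}^H\right)\right].
\end{equation*}
The remaining task is to show that this expected determinant equals $\underline{\mathrm{Per}}(\gamma\boldsymbol\Omega\boldsymbol\Lambda)$.

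For the expectation, expand the determinant via the Cauchy--Binet formula in principal minors:
\begin{equation*}
\det(\boldsymbol I+\gamma\widetilde{\boldsymbol H}\boldsymbol\Lambda\widetilde{\boldsymbol H}^H)=\sum_{k}\gamma^k\sum_{\alpha\in\mathcal S_{N_r}^{(k)}}\sum_{\beta\in\mathcal S_{N_t}^{(k)}}\Big(\prod_{j\in\beta}\lambda_j\Big)\big|\det \widetilde{\boldsymbol H}^{\alpha}_{\beta}\big|^2 .
\end{equation*}
Write $|\det\widetilde{\boldsymbol H}^\alpha_\beta|^2=\sum_{\sigma,\tau}\mathrm{sgn}(\sigma)\mathrm{sgn}(\tau)\prod_i \widetilde h_{\alpha_i\sigma(i)}\widetilde h^*_{\alpha_i\tau(i)}$, where $\sigma,\tau$ are bijections $\alpha\to\beta$. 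Then take expectations term by term.

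The entries $\widetilde h_{ij}=d_{ij}+m_{ij}h_{0,ij}$ are independent, with $\mathbb E\widetilde h_{ij}=d_{ij}$ and $\mathbb E|\widetilde h_{ij}|^2=\omega_{ij}$. The claim is that only the diagonal terms $\sigma=\tau$ survive, each contributing $\prod_i\omega_{\alpha_i\sigma(i)}$. Consider a term with $\sigma\ne\tau$. Some entry then appears unpaired, i.e.\ as $\widetilde h$ without its conjugate or vice versa, and by independence its expectation factors as $d_{ij}$ or $d_{ij}^*$. This is the main obstacle: when $\boldsymbol D\neq\boldsymbol 0$, such cross terms need not vanish individually.

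To handle it, I would use the hypothesis that $\boldsymbol D$ has at most one nonzero entry per row and column. Decompose $\sigma^{-1}\tau$ into cycles. On a nontrivial cycle, the unpaired entries of $\sigma$ occupy cells $(\alpha_i,\sigma(i))$ and those of $\tau$ occupy cells $(\alpha_i,\tau(i))$ in the same rows. For the term to be nonzero, both cells in a given row would need nonzero mean, which gives two nonzero entries of $\boldsymbol D$ in one row. That is impossible. Hence every off-diagonal term has expectation zero. For the diagonal terms, each factor contributes $\mathbb E|\widetilde h_{\alpha_i\sigma(i)}|^2=\omega_{\alpha_i\sigma(i)}$ by independence.

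With the cross terms gone, $\mathbb E|\det\widetilde{\boldsymbol H}^\alpha_\beta|^2=\mathrm{Per}(\boldsymbol\Omega^\alpha_\beta)$. The expected determinant therefore becomes $\sum_k\sum_{\alpha,\beta}\mathrm{Per}\big((\gamma\boldsymbol\Omega\boldsymbol\Lambda)^\alpha_\beta\big)$. This is exactly the Laplace-type expansion of $\mathrm{Per}([\boldsymbol I_{N_r}\ \gamma\boldsymbol\Omega\boldsymbol\Lambda])$ over the choice of columns taken from the identity block, i.e.\ $\underline{\mathrm{Per}}(\gamma\boldsymbol\Omega\boldsymbol\Lambda)$ by \eqref{eq:extended_permanent_def_append}. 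Taking $\log_2$ and maximizing over $\boldsymbol\lambda$ completes the argument.
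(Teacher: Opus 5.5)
Your proof is correct and follows the same route as the paper. Both use Jensen's inequality, then the principal-minor and Cauchy--Binet expansion, then the identity $\mathbb{E}|\det\widetilde{\boldsymbol H}^{\alpha}_{\beta}|^2=\mathrm{Per}(\boldsymbol\Omega^{\alpha}_{\beta})$, and finally the Laplace-type expansion of $\mathrm{Per}([\boldsymbol I_{N_r}\ \gamma\boldsymbol\Omega\boldsymbol\Lambda])$. The one difference is that you prove the determinant-to-permanent identity directly by expanding over permutation pairs, whereas the paper takes it from the cited reference. Your argument also shows that the one-nonzero-per-row condition on $\boldsymbol D$ is enough by itself: any row with $\sigma(i)\neq\tau(i)$ contributes the factor $d_{\alpha_i\sigma(i)}d^*_{\alpha_i\tau(i)}$, which vanishes, so the cycle decomposition of $\sigma^{-1}\tau$ is not needed.
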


\begin{proof}
	For any feasible $\boldsymbol{\lambda}$, Jensen's inequality gives
	\begin{equation}\label{eq:jensen_step}
		\widetilde C_{\mathrm{full}}(\boldsymbol{\lambda})
		\le
		\log_2 E(\boldsymbol{\lambda}),
	\end{equation}
	where $E(\boldsymbol{\lambda})
	\triangleq
	\mathbb{E}\!\left\{
	\det\!\left(
	\boldsymbol I_{N_r}
	+
	\gamma \widetilde{\boldsymbol H}\boldsymbol{\Lambda}\widetilde{\boldsymbol H}^H
	\right)
	\right\}.$
Next, using the characteristic-polynomial expansion of the determinant,
	\begin{equation}\label{eq:det_identity}
		\det(\boldsymbol I_M+\boldsymbol X)
		=
		\sum_{k=0}^{\min(M,N)}
		\ \sum_{\hat{\alpha}_k\in \mathcal S_M^{(k)}}
		\det\!\left(\boldsymbol X^{\hat{\alpha}_k}_{\hat{\alpha}_k}\right),
	\end{equation}
	we obtain $E(\boldsymbol{\lambda})
	=
	\mathbb{E}\!\left\{
	\sum\limits_{k=0}^{\min(N_r,N_t)}
	\gamma^k
	\sum\limits_{\hat{\alpha}_k\in \mathcal S_{N_r}^{(k)}}
	\det\!\left(
	(\widetilde{\boldsymbol H}\boldsymbol{\Lambda}\widetilde{\boldsymbol H}^H)^{\hat{\alpha}_k}_{\hat{\alpha}_k}
	\right)
	\right\}$.
	Applying the Cauchy--Binet formula \cite{jointly_correlated1} to each principal minor yields
	\begin{equation}\label{eq:E_expand_2}
		\begin{aligned}
			E(\boldsymbol{\lambda})
			&=
			\sum_{k=0}^{\min(N_r,N_t)}
			\gamma^k
			\sum_{\hat{\alpha}_k\in \mathcal S_{N_r}^{(k)}}
			\sum_{\hat{\beta}_k\in \mathcal S_{N_t}^{(k)}} \\
			&\left(
			\det\!\left(\boldsymbol{\Lambda}^{\hat{\beta}_k}_{\hat{\beta}_k}\right)
			\,
			\mathbb{E}\!\left\{
			\det\!\left(\widetilde{\boldsymbol H}^{\hat{\alpha}_k}_{\hat{\beta}_k}\right)
			\det\!\left(\left(\widetilde{\boldsymbol H}^H\right)^{\hat{\beta}_k}_{\hat{\alpha}_k}\right)
			\right\}\right).
		\end{aligned}
	\end{equation}
	Let $
		\boldsymbol X = \widetilde{\boldsymbol H}^{\hat{\alpha}_k}_{\hat{\beta}_k}$. By construction, the entries of $\boldsymbol X$ are independent, and its second-order moment matrix is
	\begin{equation}
		\mathbb{E}\{\boldsymbol X\odot \boldsymbol X^*\}
		=
		\boldsymbol{\Omega}^{\hat{\alpha}_k}_{\hat{\beta}_k}.
	\end{equation}
	Moreover, the structural assumption on $\boldsymbol D$ guarantees that $\boldsymbol X$ satisfies the condition required by the determinant-to-permanent identity. Therefore,
	\begin{equation}\label{eq:lemma4_apply}
		\mathbb{E}\!\left\{
		\det\!\left(\widetilde{\boldsymbol H}^{\hat{\alpha}_k}_{\hat{\beta}_k}\right)
		\det\!\left(\left(\widetilde{\boldsymbol H}^H\right)^{\hat{\beta}_k}_{\hat{\alpha}_k}\right)
		\right\}
		=
		\mathrm{Per}\!\left(
		\boldsymbol{\Omega}^{\hat{\alpha}_k}_{\hat{\beta}_k}
		\right).
	\end{equation}
	Substituting \eqref{eq:lemma4_apply} into \eqref{eq:E_expand_2}, we obtain
	\begin{equation}\label{eq:E_expand_3}
		E(\boldsymbol{\lambda})
		=
		\sum_{k=0}^{\min(N_r,N_t)}
		\gamma^k
		\sum_{\hat{\alpha}_k\in \mathcal S_{N_r}^{(k)}}
		\sum_{\hat{\beta}_k\in \mathcal S_{N_t}^{(k)}}
		\det\!\left(\boldsymbol{\Lambda}^{\hat{\beta}_k}_{\hat{\beta}_k}\right)
		\,
		\mathrm{Per}\!\left(
		\boldsymbol{\Omega}^{\hat{\alpha}_k}_{\hat{\beta}_k}
		\right).
	\end{equation}
 Since each selected submatrix is square, the scaling property of the permanent gives
	\begin{equation}\label{eq:per_scaling_square}
		\det\!\left(\boldsymbol{\Lambda}^{\hat{\beta}_k}_{\hat{\beta}_k}\right)
		\,
		\mathrm{Per}\!\left(
		\boldsymbol{\Omega}^{\hat{\alpha}_k}_{\hat{\beta}_k}
		\right)
		=
		\mathrm{Per}\!\left(
		(\boldsymbol{\Omega}\boldsymbol{\Lambda})^{\hat{\alpha}_k}_{\hat{\beta}_k}
		\right).
	\end{equation}
	Hence,
	\begin{equation}\label{eq:E_expand_4}
		E(\boldsymbol{\lambda})
		=
		\sum_{k=0}^{\min(N_r,N_t)}
		\gamma^k
		\sum_{\hat{\alpha}_k\in \mathcal S_{N_r}^{(k)}}
		\sum_{\hat{\beta}_k\in \mathcal S_{N_t}^{(k)}}
		\mathrm{Per}\!\left(
		(\boldsymbol{\Omega}\boldsymbol{\Lambda})^{\hat{\alpha}_k}_{\hat{\beta}_k}
		\right).
	\end{equation}
 Using the Laplace-type expansion of the permanent and the definition of the extended permanent, the above summation can be compactly written as
	\begin{equation}\label{eq:E_extended_permanent}
		E(\boldsymbol{\lambda})
		=
		\underline{\mathrm{Per}}
		\!\left(
		\gamma \boldsymbol{\Omega}\boldsymbol{\Lambda}
		\right).
	\end{equation}
	Substituting \eqref{eq:E_extended_permanent} into \eqref{eq:jensen_step} yields $		\widetilde C_{\mathrm{full}}(\boldsymbol{\lambda})
	\le
	\log_2
	\underline{\mathrm{Per}}
	\!\left(
	\gamma \boldsymbol{\Omega}\boldsymbol{\Lambda}
	\right)$.
	Maximizing both sides over all $\boldsymbol{\lambda}$ completes the proof.
\end{proof}
\vspace{-2mm}
Theorem~\ref{thm:dual_fluid_upper_bound} shows that the upper bound depends only on the average SNR $\rho$, the transmit power-allocation vector $\boldsymbol{\lambda}$, and the joint eigenmode coupling matrix $\boldsymbol{\Omega}$. Therefore, {\em once the channel statistics are known}, the upper bound can be evaluated without Monte-Carlo averaging over the instantaneous dual-fluid channel realizations. We next relate a special case of zero-mean Rayleigh dual-fluid channel in Corollary~\ref{cor:selection_upper_bound}.

\begin{corollary}\label{cor:selection_upper_bound}
When the jointly correlated dual-fluid channel is purely scattered, i.e., $\boldsymbol D=\boldsymbol 0$, we have
\begin{equation}
	\boldsymbol{\Omega}=\boldsymbol M\odot \boldsymbol M.
\end{equation}
In this case, the structural condition on the mean matrix is automatically satisfied, and Theorem~\ref{thm:dual_fluid_upper_bound} reduces to
\begin{equation}\label{eq:rayleigh_upper_bound}
	C_{\mathrm{full}}
	\le
	\max_{\boldsymbol{\lambda}\succeq \boldsymbol{0},\, \boldsymbol{1}^T\boldsymbol{\lambda}=N_t}
	\log_2
	\underline{\mathrm{Per}}
	\!\left(
	\gamma (\boldsymbol M\odot \boldsymbol M)\mathrm{diag}(\boldsymbol{\lambda})
	\right).
\end{equation}
This form will be particularly convenient for the analysis of Rayleigh fluid antenna channels and for the development of low-complexity power-allocation strategies based solely on channel statistics.
\end{corollary}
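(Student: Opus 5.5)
The corollary follows from Theorem~\ref{thm:dual_fluid_upper_bound} once its hypotheses are checked in the special case $\boldsymbol D=\boldsymbol 0$. The plan has three steps: verify the hypotheses, simplify $\boldsymbol\Omega$, and substitute into \eqref{eq:Cfull_upper_bound}.

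\emph{Hypotheses.} The structural assumption asks that $\boldsymbol D$ have at most one non-zero entry in each row and each column. The zero matrix has no non-zero entries at all, so this holds vacuously. The distributional assumptions on $\boldsymbol H_0$ (independent, zero-mean, unit-variance entries) are not affected by setting $\boldsymbol D=\boldsymbol 0$. Hence every hypothesis of Theorem~\ref{thm:dual_fluid_upper_bound} is satisfied.

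\emph{Simplifying $\boldsymbol\Omega$.} With $\boldsymbol D=\boldsymbol 0$ we have $\widetilde{\boldsymbol H}=\boldsymbol M\odot\boldsymbol H_0$. Then
\[
[\boldsymbol\Omega]_{i,j}=\mathbb E\{|[\boldsymbol M]_{i,j}[\boldsymbol H_0]_{i,j}|^2\}=[\boldsymbol M]_{i,j}^2 ,
\]
using that $\boldsymbol M$ is real, nonnegative and deterministic and that $\boldsymbol H_0$ has unit-variance, zero-mean entries. Equivalently, $|\boldsymbol D|^2=\boldsymbol 0$ in \eqref{eq:omega_def}, which gives $\boldsymbol\Omega=\boldsymbol M\odot\boldsymbol M$.

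\emph{Substitution.} Insert this $\boldsymbol\Omega$ and $\boldsymbol\Lambda=\mathrm{diag}(\boldsymbol\lambda)$ into the bound \eqref{eq:Cfull_upper_bound}. The result is exactly \eqref{eq:rayleigh_upper_bound}, with the same feasible set for $\boldsymbol\lambda$.

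The only point needing care is a consistency check rather than a real obstacle. In the theorem's proof, the determinant-to-permanent identity \eqref{eq:lemma4_apply} must hold for every square submatrix of $\boldsymbol M\odot\boldsymbol H_0$. These submatrices have independent, zero-mean entries with variances $[\boldsymbol M]_{i,j}^2$. In that setting the identity follows directly: expand both determinants over permutations $\sigma,\tau$. Independence and zero mean kill every cross term with $\sigma\neq\tau$. The diagonal terms $\sigma=\tau$ sum to the permanent of the variance submatrix. This confirms that the zero-mean case lies inside the theorem's scope.
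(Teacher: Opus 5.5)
Your proposal is correct and follows the paper, which treats the corollary as an immediate specialization of Theorem~\ref{thm:dual_fluid_upper_bound}: the structural condition holds vacuously for $\boldsymbol D=\boldsymbol 0$, the term $|\boldsymbol D|^2$ drops out of \eqref{eq:omega_def}, and substituting into \eqref{eq:Cfull_upper_bound} gives \eqref{eq:rayleigh_upper_bound}. Your extra direct check of \eqref{eq:lemma4_apply} for independent zero-mean entries (only the $\sigma=\tau$ terms survive) is sound and makes the zero-mean case self-contained, but the paper does not need it because it relies on the cited identity.
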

\vspace{-6mm}
\section{Optimal Statistical Power Allocation}
\label{sec:power_allocation}

In this section, we optimize the transmit eigenmode power allocation for the permanent-based capacity upper bound derived in Theorem~\ref{thm:dual_fluid_upper_bound}. Specifically, we consider
\begin{equation}\label{eq:power_alloc_problem_1}
	C_u
	=
	\max_{\boldsymbol{\lambda}\succeq \boldsymbol{0},\,\boldsymbol{1}^T\boldsymbol{\lambda}=N_t}
	\widetilde C_u(\boldsymbol{\lambda}),
\end{equation}
with $\widetilde C_u(\boldsymbol{\lambda})
=
\log_2
\underline{\mathrm{Per}}
\!\left(
\gamma \boldsymbol{\Omega}\mathrm{diag}(\boldsymbol{\lambda})
\right)$,
where $
	\gamma = \frac{\rho}{N_t},~
	\boldsymbol{\lambda}=[\lambda_1,\lambda_2,\ldots,\lambda_{N_t}]^T,~
	\lambda_i\ge 0,\ \sum_{i=1}^{N_t}\lambda_i=N_t$. Since the logarithm is monotonically increasing, the optimization in \eqref{eq:power_alloc_problem_1} is equivalent to
\begin{equation}\label{eq:power_alloc_problem_equiv}
	\max_{\boldsymbol{\lambda}\succeq \boldsymbol{0},\,\boldsymbol{1}^T\boldsymbol{\lambda}=N_t}
	F(\boldsymbol{\lambda}),
\end{equation}
where $F(\boldsymbol{\lambda})
\triangleq
\underline{\mathrm{Per}}
\!\left(
\gamma \boldsymbol{\Omega}\mathrm{diag}(\boldsymbol{\lambda})
\right)$.

\paragraph{Marginal Expansion w.r.t. Each Transmit Eigenmode}

To derive the optimality conditions, we first isolate the dependence of $F(\boldsymbol{\lambda})$ on each $\lambda_i$. Let $\boldsymbol{\lambda}_{(i)}$ denote the vector obtained from $\boldsymbol{\lambda}$ by removing its $i$-th entry. Likewise, let $\boldsymbol{\Omega}_{(i)}$ denote the matrix obtained from $\boldsymbol{\Omega}$ by deleting its $i$-th column, and let $\boldsymbol{\Omega}^{(i)}_{(m)}$ denote the matrix obtained by deleting the $m$-th row and the $i$-th column of $\boldsymbol{\Omega}$. Using the Laplace expansion of the extended permanent with respect to the $i$-th column of $\boldsymbol{\Omega}\mathrm{diag}(\boldsymbol{\lambda})$, we have the following result.

\begin{proposition}\label{prop:affine_lambda_i}
	For any $i\in\{1,2,\ldots,N_t\}$, the function $F(\boldsymbol{\lambda})$ is affine in $\lambda_i$ and can be written as
	\begin{equation}\label{eq:F_affine_i}
		F(\boldsymbol{\lambda})
		=
		F_i^{(0)}(\boldsymbol{\lambda}_{(i)})
		+
		\lambda_i F_i^{(1)}(\boldsymbol{\lambda}_{(i)}),
	\end{equation}
	where $F_i^{(0)}(\boldsymbol{\lambda}_{(i)})
	=
	\underline{\mathrm{Per}}
	\!\left(
	\gamma \boldsymbol{\Omega}_{(i)}\mathrm{diag}(\boldsymbol{\lambda}_{(i)})
	\right)$, and $F_i^{(1)}(\boldsymbol{\lambda}_{(i)})
	=
	\gamma
	\sum_{m=1}^{N_r}
	[\boldsymbol{\Omega}]_{m,i}\,
	\underline{\mathrm{Per}}
	\!\left(
	\gamma \boldsymbol{\Omega}^{(i)}_{(m)}
	\mathrm{diag}(\boldsymbol{\lambda}_{(i)})
	\right)$.
	Consequently,
	\begin{equation} \label{eq:dFdC}
		\frac{\partial F(\boldsymbol{\lambda})}{\partial \lambda_i}
		=
		F_i^{(1)}(\boldsymbol{\lambda}_{(i)}),~
		\frac{\partial \widetilde C_u(\boldsymbol{\lambda})}{\partial \lambda_i}
		=
		\frac{
			F_i^{(1)}(\boldsymbol{\lambda}_{(i)})
		}{
			\ln 2 \, F(\boldsymbol{\lambda})
		}.
	\end{equation}
\end{proposition}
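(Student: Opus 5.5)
We prove the affine decomposition by working directly with the minor-sum representation of the extended permanent established in the proof of Theorem~\ref{thm:dual_fluid_upper_bound}, namely
\begin{equation*}
F(\boldsymbol{\lambda})=\underline{\mathrm{Per}}(\gamma\boldsymbol{\Omega}\boldsymbol{\Lambda})
=\sum_{k}\gamma^k\sum_{\hat{\alpha}_k\in\mathcal S_{N_r}^{(k)}}\sum_{\hat{\beta}_k\in\mathcal S_{N_t}^{(k)}}\mathrm{Per}\big((\boldsymbol{\Omega}\boldsymbol{\Lambda})^{\hat{\alpha}_k}_{\hat{\beta}_k}\big),
\end{equation*}
which is simply the expansion of $\mathrm{Per}([\boldsymbol I_{N_r}\ \gamma\boldsymbol\Omega\boldsymbol\Lambda])$ obtained by choosing, for every row, either its identity entry or a column of the second block. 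The key observation is that $\boldsymbol{\Lambda}$ scales column $j$ of $\boldsymbol\Omega$ by $\lambda_j$. Since every term of a permanent uses each column at most once, each summand is a polynomial in $\boldsymbol\lambda$ of degree at most one in each $\lambda_j$. This already gives multi-affinity, and in particular affinity in $\lambda_i$.

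To identify the two coefficients, I split the triple sum according to whether $i\in\hat\beta_k$.

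\emph{Terms with $i\notin\hat\beta_k$.} Here $\hat\beta_k$ ranges over subsets of the remaining $N_t-1$ columns, and the summands are exactly the minors of $\boldsymbol\Omega_{(i)}\mathrm{diag}(\boldsymbol\lambda_{(i)})$. Resumming them gives $\underline{\mathrm{Per}}(\gamma\boldsymbol\Omega_{(i)}\mathrm{diag}(\boldsymbol\lambda_{(i)}))=F_i^{(0)}$.

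\emph{Terms with $i\in\hat\beta_k$.} I Laplace-expand each $\mathrm{Per}((\boldsymbol\Omega\boldsymbol\Lambda)^{\hat\alpha_k}_{\hat\beta_k})$ along column $i$. This produces
\begin{equation*}
\sum_{m\in\hat\alpha_k}\lambda_i[\boldsymbol\Omega]_{m,i}\,\mathrm{Per}\big((\boldsymbol\Omega\boldsymbol\Lambda)^{\hat\alpha_k\setminus\{m\}}_{\hat\beta_k\setminus\{i\}}\big),
\end{equation*}
with no signs, because this is a permanent. I then exchange the order of summation, summing first over $m\in\{1,\dots,N_r\}$ and then over pairs $(\hat\alpha',\hat\beta')$ of size $k-1$ with $m\notin\hat\alpha'$ and $i\notin\hat\beta'$. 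The map $(\hat\alpha_k,\hat\beta_k,m)\mapsto(m,\hat\alpha_k\setminus\{m\},\hat\beta_k\setminus\{i\})$ is a bijection onto these index sets. Pulling one factor $\gamma$ out of $\gamma^k$ leaves $\gamma^{k-1}$, and the inner sum becomes the minor expansion of $\underline{\mathrm{Per}}(\gamma\boldsymbol\Omega^{(i)}_{(m)}\mathrm{diag}(\boldsymbol\lambda_{(i)}))$. This yields $\lambda_iF_i^{(1)}$ with precisely the stated form, including the prefactor $\gamma$.

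The main point needing care is this reindexing bookkeeping. I have to check that the $k=0$ term (value $1$) lands in $F_i^{(0)}$, and that the range $k\le\min(N_r,N_t)$ matches after deletion. Both hold because the extended permanent of a rectangular matrix automatically truncates at the smaller dimension.

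Finally, $F_i^{(0)}$ and $F_i^{(1)}$ do not depend on $\lambda_i$, so differentiating the affine form gives $\partial F/\partial\lambda_i=F_i^{(1)}$. The chain rule applied to $\widetilde C_u=\log_2F=\ln F/\ln 2$ gives the second identity in \eqref{eq:dFdC}. This step uses $F>0$, which holds because the $k=0$ term equals $1$ and all other terms are nonnegative, since $\boldsymbol\Omega\ge\boldsymbol 0$ and $\boldsymbol\lambda\succeq\boldsymbol 0$.
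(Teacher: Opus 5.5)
Your proof is correct and follows essentially the same route as the paper. The paper invokes multilinearity of the extended permanent in its columns and expands along the $i$-th column of $\gamma\boldsymbol{\Omega}\mathrm{diag}(\boldsymbol{\lambda})$, splitting into assignments where the $i$-th transmit eigenmode is unused (giving $F_i^{(0)}$) and those where it is matched to some receive row $m$ (giving $\lambda_i F_i^{(1)}$); this is exactly your split on $i\in\hat\beta_k$ followed by the column Laplace expansion, which you carry out in the minor-sum representation while also supplying the reindexing bijection, the index-range check, and the positivity of $F$ that the paper leaves implicit.
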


\begin{proof}
	By construction, the $i$-th column of $\boldsymbol{\Omega}\mathrm{diag}(\boldsymbol{\lambda})$ is equal to $\lambda_i [\boldsymbol{\Omega}]_{:,i}$. Since the extended permanent is multilinear with respect to the columns of its matrix argument, $F(\boldsymbol{\lambda})$ is affine in $\lambda_i$. Expanding with respect to that column yields \eqref{eq:F_affine_i}. The first term corresponds to the case where the $i$-th transmit eigenmode is excluded, whereas the second term accounts for all assignments in which the $i$-th transmit eigenmode is used and matched to one of the receive rows. Equations in \eqref{eq:dFdC} then follow directly.
\end{proof}

\paragraph{KKT Conditions}

We next characterize the optimal statistical power allocation through the Karush--Kuhn--Tucker (KKT) conditions.

\begin{proposition}\label{prop:KKT}
	Let $\boldsymbol{\lambda}^\star$ be a locally optimal solution of \eqref{eq:power_alloc_problem_1}. Then there exists a scalar $\mu^\star$ such that
	\begin{equation}\label{eq:KKT_stationarity}
		\frac{
			F_i^{(1)}(\boldsymbol{\lambda}^\star_{(i)})
		}{
			\ln 2 \, F(\boldsymbol{\lambda}^\star)
		}
		-\mu^\star
		\le 0,~i=1,2,\ldots,N_t,
	\end{equation}
	with equality whenever $\lambda_i^\star>0$. Equivalently,
	\begin{equation}\label{eq:KKT_active}
		F_i^{(1)}(\boldsymbol{\lambda}^\star_{(i)})
		=
		\mu^\star \ln 2\, F(\boldsymbol{\lambda}^\star),~\forall\, i \text{ such that } \lambda_i^\star>0,
	\end{equation}
	and
	\begin{equation}\label{eq:KKT_inactive}
		F_i^{(1)}(\boldsymbol{\lambda}^\star_{(i)})
		\le
		\mu^\star \ln 2\, F(\boldsymbol{\lambda}^\star),~ \forall\, i \text{ such that } \lambda_i^\star=0.
	\end{equation}
\end{proposition}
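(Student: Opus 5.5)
The plan is to treat \eqref{eq:power_alloc_problem_1} as maximizing a smooth function over the scaled simplex $\{\boldsymbol{\lambda}\succeq\boldsymbol{0},\,\boldsymbol{1}^T\boldsymbol{\lambda}=N_t\}$ and to invoke the standard first-order necessary (KKT) conditions. Smoothness comes first. By Proposition~\ref{prop:affine_lambda_i}, $F(\boldsymbol{\lambda})$ is a multivariate polynomial in $\boldsymbol{\lambda}$, being multiaffine in each coordinate. Its constant term (the $k=0$ term in \eqref{eq:E_expand_4}) equals $1$, and every other term is nonnegative because $\boldsymbol{\Omega}$ has nonnegative entries and $\boldsymbol{\lambda}\succeq\boldsymbol{0}$. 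Hence $F(\boldsymbol{\lambda})\ge 1>0$ on the feasible set, and $\widetilde C_u=\log_2 F$ is continuously differentiable on an open neighbourhood of that set. Its partial derivatives are given by \eqref{eq:dFdC}.

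Next, constraint qualification. All constraints are affine: $-\lambda_i\le 0$ for each $i$ and $\boldsymbol{1}^T\boldsymbol{\lambda}-N_t=0$. Under linear constraints the Abadie constraint qualification holds automatically, so KKT multipliers exist at every local maximizer. We form the Lagrangian
\begin{equation*}
\mathcal{L}(\boldsymbol{\lambda},\mu,\boldsymbol{\nu})=\widetilde C_u(\boldsymbol{\lambda})-\mu\big(\boldsymbol{1}^T\boldsymbol{\lambda}-N_t\big)+\sum_{i=1}^{N_t}\nu_i\lambda_i .
\end{equation*}
At a local optimum $\boldsymbol{\lambda}^\star$ the KKT conditions supply $\mu^\star\in\mathbb{R}$ and $\nu_i^\star\ge 0$ such that stationarity $\partial\widetilde C_u/\partial\lambda_i-\mu^\star+\nu_i^\star=0$ holds and complementary slackness $\nu_i^\star\lambda_i^\star=0$ holds.

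We then eliminate $\nu_i^\star$. Stationarity gives $\partial\widetilde C_u/\partial\lambda_i-\mu^\star=-\nu_i^\star\le 0$, and this is \eqref{eq:KKT_stationarity} after substituting \eqref{eq:dFdC}. When $\lambda_i^\star>0$, complementary slackness forces $\nu_i^\star=0$, so equality holds. Multiplying through by the positive quantity $\ln 2\,F(\boldsymbol{\lambda}^\star)$ gives \eqref{eq:KKT_active} and \eqref{eq:KKT_inactive}. As a consistency check, one can also obtain this directly without invoking KKT machinery. If some active $i$ and some $j$ satisfied $\partial_i\widetilde C_u<\partial_j\widetilde C_u$, then shifting a small amount $\epsilon$ of power from $\lambda_i$ to $\lambda_j$ would be a feasible direction with positive directional derivative, contradicting local optimality. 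Hence all active partials are equal to a common value $\mu^\star$, and all inactive partials are at most $\mu^\star$.

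The only delicate point is justifying differentiability and $F>0$ at boundary points of the simplex, where some $\lambda_i^\star=0$. I would handle it through the nonnegative polynomial expansion \eqref{eq:E_expand_4} as above, which makes the whole argument routine. No convexity is needed, because the statement asserts only necessary conditions at a local optimum.
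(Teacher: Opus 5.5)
Your proposal is correct and follows the same route as the paper: form the Lagrangian with multipliers $\mu$ and $\nu_i\ge 0$, use the gradient from \eqref{eq:dFdC}, then eliminate $\nu_i$ via stationarity and complementary slackness. Your extra justifications fill in steps the paper leaves implicit: $F\ge 1$ on the feasible set, so $\log_2 F$ is smooth and multiplying by $\ln 2\,F(\boldsymbol{\lambda}^\star)$ preserves the inequalities, and the linear constraints guarantee KKT multipliers exist at any local optimum.
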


\begin{proof}
	Consider the Lagrangian
	\begin{equation}
		\mathcal L(\boldsymbol{\lambda},\mu,\boldsymbol{\nu})
		=
		\widetilde C_u(\boldsymbol{\lambda})
		-
		\mu\left(\boldsymbol{1}^T\boldsymbol{\lambda}-N_t\right)
		+
		\sum_{i=1}^{N_t}\nu_i \lambda_i,
	\end{equation}
	where $\nu_i\ge 0$ is the dual variable associated with the non-negativity constraint $\lambda_i\ge 0$. Using \eqref{eq:dFdC}, the stationarity condition is
	\begin{equation}
		\frac{
			F_i^{(1)}(\boldsymbol{\lambda}_{(i)})
		}{
			\ln 2 \, F(\boldsymbol{\lambda})
		}
		-\mu+\nu_i=0,~ i=1,\ldots,N_t.
	\end{equation}
	Together with primal feasibility, dual feasibility, and complementary slackness $\nu_i\lambda_i=0$, we obtain
	\eqref{eq:KKT_stationarity}--\eqref{eq:KKT_inactive}.
\end{proof}
Proposition~\ref{prop:KKT} shows that, at an optimal point, all active transmit eigenmodes must have the same marginal contribution to the objective. Therefore, the optimal allocation tends to shift power toward those transmit eigenmodes associated with larger values of
\begin{equation}
	F_i^{(1)}(\boldsymbol{\lambda}_{(i)})
	=
	\gamma
	\sum_{m=1}^{N_r}
	[\boldsymbol{\Omega}]_{m,i}\,
	\underline{\mathrm{Per}}
	\!\left(
	\gamma \boldsymbol{\Omega}^{(i)}_{(m)}
	\mathrm{diag}(\boldsymbol{\lambda}_{(i)})
	\right),
\end{equation}
which can be interpreted as the statistical marginal utility of the $i$-th transmit eigenmode.

\paragraph{Projected-Gradient Statistical Power Allocation}
Although the upper bound in \eqref{eq:power_alloc_problem_1} is available in closed form, {\em the optimization over $\boldsymbol{\lambda}$ generally does not admit a closed-form global solution for an arbitrary jointly correlated dual-fluid channel}. Nevertheless, Proposition~\ref{prop:affine_lambda_i} provides an explicit gradient expression, which enables a projected-gradient algorithm for moderate system dimensions.

Let $\boldsymbol g(\boldsymbol{\lambda})
		=\left[
		\frac{\partial \widetilde C_u(\boldsymbol{\lambda})}{\partial \lambda_1},
		\ldots,
		\frac{\partial \widetilde C_u(\boldsymbol{\lambda})}{\partial \lambda_{N_t}}
		\right]^T$.Given a feasible initialization $\boldsymbol{\lambda}^{(0)}\succeq \boldsymbol{0}$ with $\boldsymbol{1}^T\boldsymbol{\lambda}^{(0)}=N_t$, the iterative update is
		\begin{equation}\label{eq:PG_update_1}
			\boldsymbol z^{(\ell+1)}
			=
			\boldsymbol{\lambda}^{(\ell)}
			+
			\alpha_\ell \boldsymbol g\!\left(\boldsymbol{\lambda}^{(\ell)}\right),
		\end{equation}
		followed by the Euclidean projection onto the simplex
		\begin{equation}\label{eq:PG_update_2}
			\boldsymbol{\lambda}^{(\ell+1)}
			=
			\Pi_{\mathcal S}\!\left(
			\boldsymbol z^{(\ell+1)}
			\right),
		\end{equation}
		where $
			\mathcal S
			=
			\left\{
			\boldsymbol{\lambda}\in\mathbb{R}^{N_t}
			\;\middle|\;
			\lambda_i\ge 0,\ \forall i,\ \boldsymbol{1}^T\boldsymbol{\lambda}=N_t
			\right\},$ and $\alpha_\ell>0$ is a step size. The iterations are terminated when
		\begin{equation}
			\left\|
			\boldsymbol{\lambda}^{(\ell+1)}-\boldsymbol{\lambda}^{(\ell)}
			\right\|_2
			\le \varepsilon,
		\end{equation}
		for some prescribed tolerance $\varepsilon>0$. The complete procedure is summarized in Algorithm~\ref{alg:pg_power_alloc}. The projection in \eqref{eq:PG_update_2} admits the standard closed form
		\begin{equation}\label{eq:simplex_projection}
			\big[\Pi_{\mathcal S}(\boldsymbol z)\big]_i
			=
			\max\{z_i-\tau,0\},
		\end{equation}
		where the scalar $\tau$ is chosen such that
		\begin{equation}
			\sum_{i=1}^{N_t}\max\{z_i-\tau,0\}=N_t.
		\end{equation}
		
		\begin{algorithm}[t]
			\caption{Projected-Gradient Statistical Power Allocation}
			\label{alg:pg_power_alloc}
			\begin{algorithmic}[1]
				\State \textbf{Input:} $\boldsymbol{\Omega}$, $\rho$, $N_t$, tolerance $\varepsilon$
				\State Initialize $\boldsymbol{\lambda}^{(0)}=\boldsymbol{1}_{N_t}$; set $\ell=0$
				
				\Repeat
				\State Compute $
				F\!\left(\boldsymbol{\lambda}^{(\ell)}\right)
				=
				\underline{\mathrm{Per}}
				\!\left(
				\gamma\boldsymbol{\Omega}\mathrm{diag}\!\left(\boldsymbol{\lambda}^{(\ell)}\right)
				\right)$
				\For{$i=1$ to $N_t$}
				\State 
				Compute 
				\[
				F_i^{(1)}\!\left(\boldsymbol{\lambda}^{(\ell)}_{(i)}\right)
				=
				\gamma\sum_{m=1}^{N_r}
				[\boldsymbol{\Omega}]_{m,i}\,
				\underline{\mathrm{Per}}
				\!\left(
				\gamma \boldsymbol{\Omega}^{(i)}_{(m)}
				\mathrm{diag}\!\left(\boldsymbol{\lambda}^{(\ell)}_{(i)}\right)
				\right) \]
				\State Set $g_i^{(\ell)}
				=
				\frac{
					F_i^{(1)}\!\left(\boldsymbol{\lambda}^{(\ell)}_{(i)}\right)
				}{
					\ln 2 \, F\!\left(\boldsymbol{\lambda}^{(\ell)}\right)
				}$
				\EndFor
				\State Update
				$
				\boldsymbol z^{(\ell+1)}
				=
				\boldsymbol{\lambda}^{(\ell)}
				+
				\alpha_\ell \boldsymbol g^{(\ell)}
				$
				\State Project:
				$
				\boldsymbol{\lambda}^{(\ell+1)}
				=
				\Pi_{\mathcal S}\!\left(\boldsymbol z^{(\ell+1)}\right)
				$
				\State $\ell\gets \ell+1$
				\Until{$\|\boldsymbol{\lambda}^{(\ell)}-\boldsymbol{\lambda}^{(\ell-1)}\|_2\le \varepsilon$}
				\State \textbf{Output:} $\boldsymbol{\lambda}^{(\ell)}$
			\end{algorithmic}
		\end{algorithm}

		\paragraph{Structural Insights}
		The KKT characterization also yields useful insights in several important operating regimes.
 {\em --Low-SNR Regime:} When $\rho\to 0$, or equivalently $\gamma\to 0$, the extended permanent admits the first-order expansion
		\begin{equation}\label{eq:low_snr_expand_1}
			F(\boldsymbol{\lambda})
			=
			1
			+
			\gamma
			\sum_{i=1}^{N_t}
			\lambda_i
			\sum_{m=1}^{N_r}
			[\boldsymbol{\Omega}]_{m,i}
			+
			o(\gamma).
		\end{equation}
		Therefore, $\widetilde C_u(\boldsymbol{\lambda})
		=
		\frac{\gamma}{\ln 2}
		\sum_{i=1}^{N_t}
		\lambda_i
		\sum_{m=1}^{N_r}
		[\boldsymbol{\Omega}]_{m,i}
		+
		o(\gamma)$.
		This implies that, in the low-SNR regime, the optimal allocation is asymptotically given by
		\begin{equation}\label{eq:low_snr_opt}
			\lambda_i^\star
			=
			\begin{cases}
				N_t, & i=i^\star,\\
				0, & i\neq i^\star,
			\end{cases}
		\
			i^\star
			=
			\arg\max_{1\le i\le N_t}
			\sum_{m=1}^{N_r}
			[\boldsymbol{\Omega}]_{m,i}.
		\end{equation}
		Hence, all power should be assigned to the statistically strongest transmit eigenmode.
		
 {\em --High-SNR Regime with $N_r\ge N_t$:} When $N_r\ge N_t$ and $\gamma\to\infty$, the dominant term of the extended permanent is the highest-order term of degree $N_t$, namely
		\begin{equation}\label{eq:high_snr_expand_1}
			F(\boldsymbol{\lambda})
			=
			\gamma^{N_t}
			\left(\prod_{i=1}^{N_t}\lambda_i\right)
			\sum_{\hat{\alpha}_{N_t}\in\mathcal S_{N_r}^{(N_t)}}
			\mathrm{Per}
			\!\left(
			\boldsymbol{\Omega}^{\hat{\alpha}_{N_t}}_{1,\ldots,N_t}
			\right)
			+
			o(\gamma^{N_t}).
		\end{equation}
		Since the coefficient in \eqref{eq:high_snr_expand_1} is independent of $\boldsymbol{\lambda}$, maximizing the asymptotic upper bound reduces to $\max_{\lambda_i\ge 0,\ \sum_i\lambda_i=N_t}
		\prod_{i=1}^{N_t}\lambda_i$, whose solution is $\lambda_1^\star=\lambda_2^\star=\cdots=\lambda_{N_t}^\star=1$. Therefore, equal power allocation is asymptotically optimal in the high-SNR regime when $N_r\ge N_t$.

 {\em --Transmit-Symmetric Channels:} If the columns of $\boldsymbol{\Omega}$ are statistically symmetric, then the marginal utilities of all transmit eigenmodes are identical at $\boldsymbol{\lambda}=\boldsymbol{1}_{N_t}$. Hence, the equal-power allocation satisfies the KKT conditions and is a stationary point of \eqref{eq:power_alloc_problem_1}. In particular, for fully symmetric dual-fluid antenna channels, equal power allocation becomes the natural statistics-aware design.
\begin{figure}[t!]
	\centering
	\includegraphics[width=\columnwidth]{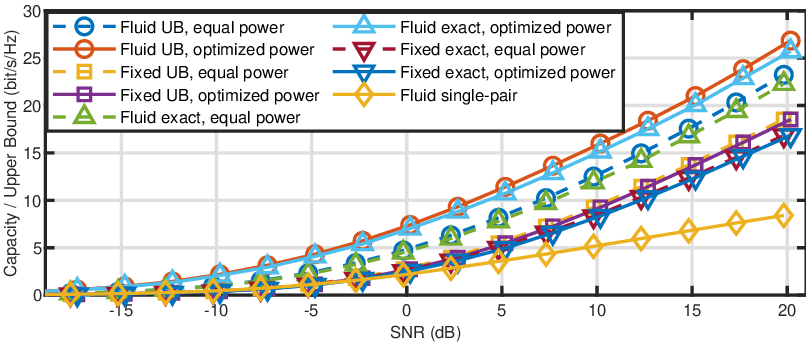}
	\caption{Capacity v.s. SNR. FAS: $(W_t,N_t)=(W_r,N_r)=(1,8)$, Fixed: half-wavelength under same aperture, i.e., $M=2$.}
	\label{fig:capacity_exact_upper}
\end{figure}
\begin{figure}[t!]
	\centering
	\includegraphics[width=\columnwidth]{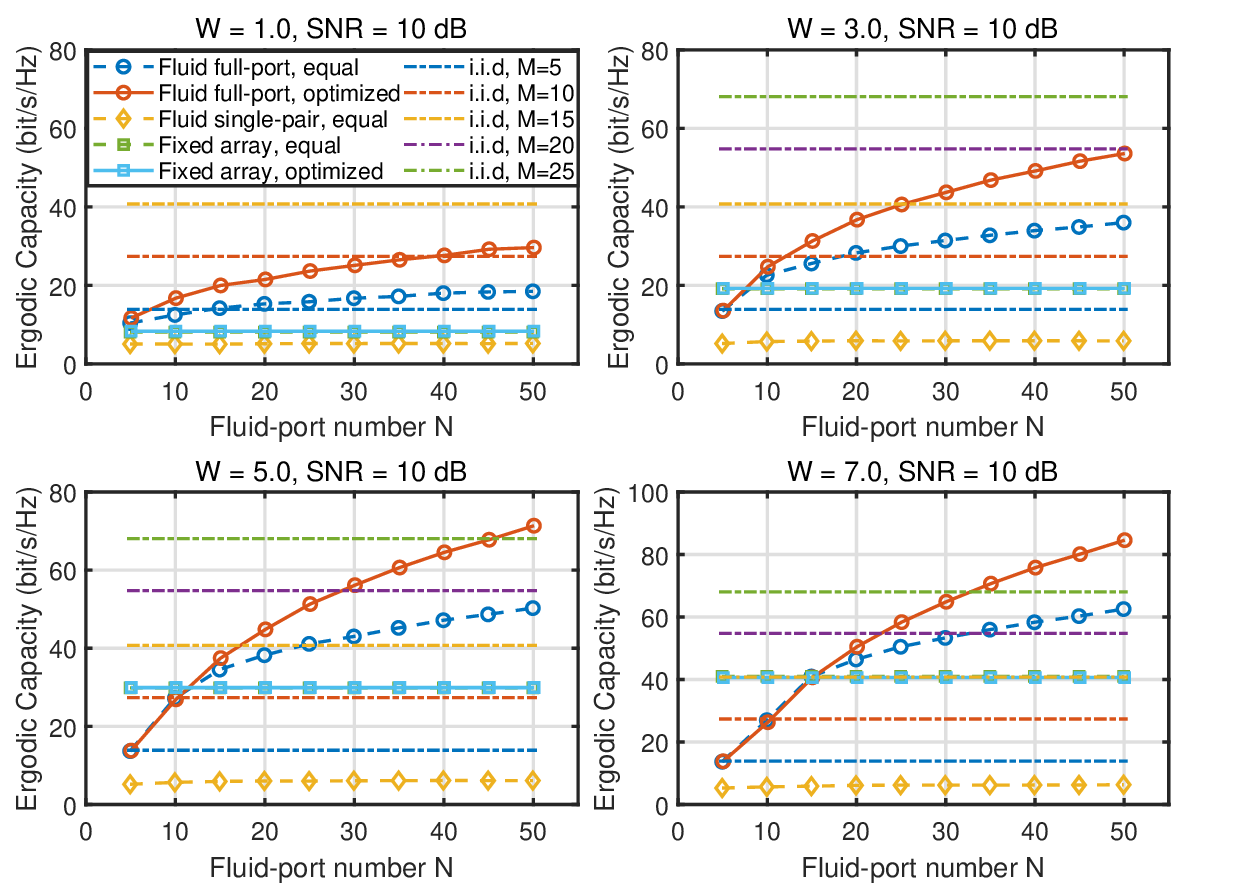}
	\caption{Capacity v.s. full port number. FAS: $N_r=N_t=N$, Fixed: half-wavelength under fixed aperture, i.e., $M=2W+1$; i.i.d antennas, $M\in \{5,10,\ldots,25\}$ with changed aperture.}
	\label{fig:capacity_port_num}
\end{figure}
\begin{figure}[t!]
	\centering
	\includegraphics[width=\columnwidth]{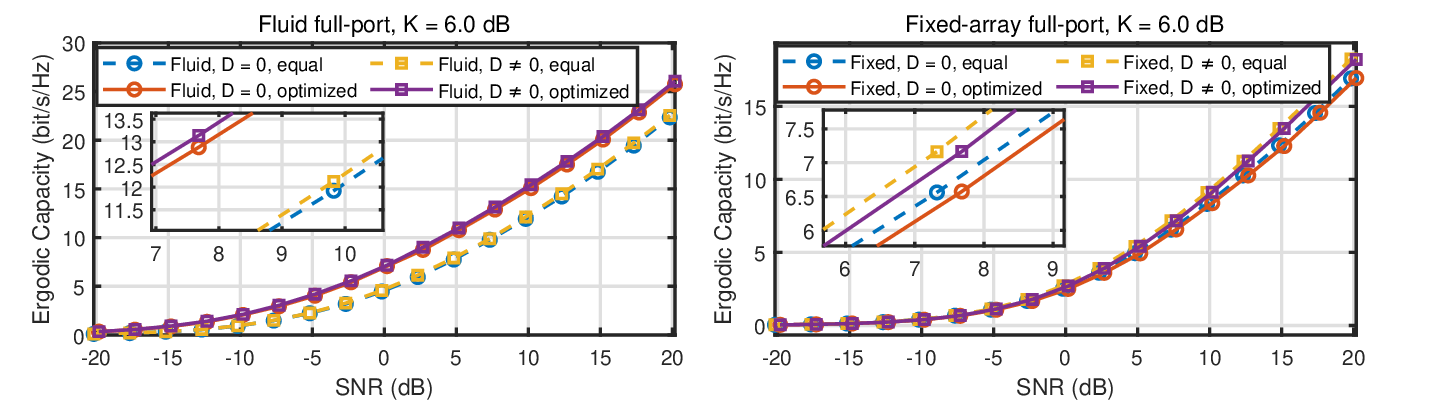}
	\caption{Capacity with/without LOS component. FAS: $(W_t,N_t)=(W_r,N_r)=(1,8)$, Fixed: half-wavelength under same aperture. The power level of non-diffusive against i.i.d component is fixed to $K=6$ dB.}
	\label{fig:capacity_zero_nonzero}
\end{figure}
\vspace{-4mm}
\section{Numerical Results}
This section presents numerical results to verify that dual-side FAS can still achieve substantial capacity even within a compact space. Fig.~\ref{fig:capacity_exact_upper} shows the capacity under different SNRs with only diffusive components, demonstrating that the mutual correlation among adjacent ports does not preclude significant capacity gains. Fig.~\ref{fig:capacity_port_num} further depicts the capacity for different numbers of total ports under purely diffusive components. Owing to the correlation, as $N$ increases, the capacity grows rapidly at first and then gradually levels off. The proposed power allocation scheme is particularly effective in the large-$N$ regime, where the eigenvalues become concentrated in a subset of dominant modes. Finally, Fig.~\ref{fig:capacity_zero_nonzero} examines the impact of non-diffusive components. Interestingly, the capacity of the dual-FAS channel increases slightly in the presence of non-diffusive components, whereas the opposite trend is observed for the fully i.i.d. fixed-antenna system.

\vspace{-4mm}
\section{Conclusion}
In this work, we investigate the exact ergodic capacity and a tight upper bound for dual-side FAS under statistical eigenmode transmission, and develop practical power allocation schemes. Numerical results demonstrate that dual-side FAS can still provide substantial capacity gains in compact spaces.

\end{document}